\newtheorem{remark}{\bfseries Remark}
\newtheorem{example}{\bfseries Example}
\newtheorem{prop}{\bfseries Proposition}
\newenvironment{varalgorithm}[1]
  {\algorithm}
  {\endalgorithm}
\begin{document}

\title{\LARGE \bf
Distributed Average Consensus under Quantized Communication \\ via Event-Triggered Mass Splitting}

\author{Apostolos~I.~Rikos 
and Christoforos~N.~Hadjicostis 
	\thanks{The authors are with the Department of Electrical and Computer Engineering at the University of Cyprus, Nicosia, Cyprus. E-mails:{\tt~\{arikos01,chadjic\}@ucy.ac.cy}.}
}
\maketitle
\thispagestyle{empty}
\pagestyle{empty}

\begin{abstract}
We study the distributed average consensus problem in multi-agent systems with directed communication links that are subject to quantized information flow. 
The goal of distributed average consensus is for the nodes, each associated with some initial value, to obtain the average (or some value close to the average) of these initial values. 
In this paper, we present and analyze a distributed averaging algorithm which operates exclusively with quantized values (specifically, the information stored, processed and exchanged between neighboring agents is subject to deterministic uniform quantization) and rely on event-driven updates (e.g., to reduce energy consumption, communication bandwidth, network congestion, and/or processor usage). 
We characterize the properties of the proposed distributed averaging protocol, illustrate its operation with an example, and show that its execution, on any time-invariant and strongly connected digraph, will allow all agents to reach, in finite time, a common consensus value that is equal to the quantized average. We conclude with comparisons against existing quantized average consensus algorithms that illustrate the performance and potential advantages of the proposed algorithm. 
\end{abstract}

\begin{keywords}

Quantized average consensus, event-triggered, distributed algorithms, quantization, digraphs, multi-agent systems. 

\end{keywords}

%
%
%
%
\section{INTRODUCTION}\label{intro}

In recent years, there has been a growing interest for control and coordination of networks consisting of multiple agents, like groups of sensors \cite{2005:XiaoBoydLall} or mobile autonomous agents \cite{2004:Murray}. 
A problem of particular interest in distributed control is the \textit{consensus} problem where the objective is to develop distributed algorithms that can be used by a group of agents in order to reach agreement to a common decision. 
The agents start with different initial values/information and are allowed to communicate locally via inter-agent information exchange under some constraints on connectivity. 
Consensus processes play an important role in many problems, such as leader election \cite{1996:Lynch},  motion coordination of multi-vehicle systems \cite{2005:Olshevsky_Tsitsiklis, 2004:Murray}, and clock synchronization \cite{2007:Gamba}.


One special case of the consensus problem is distributed averaging, where each agent (initially endowed with a numerical value) can send/receive information to/from other agents in its neighborhood and update its value iteratively, so that eventually, all agents compute the average of the initial values. 
Average consensus is an important problem and has been studied extensively in settings where each agent processes and transmits real-valued states with infinite precision \cite{2018:BOOK_Hadj, 2005:Olshevsky_Tsitsiklis, 2008:Sundaram_Hadjicostis, 2013:Themis_Hadj_Johansson, 2004:XiaoBoyd, 2010:Dimakis_Rabbat, 2011:Morse_Yu, 1984:Tsitsiklis}.

Most existing algorithms, only guarantee asymptotic convergence to the consensus value and cannot be directly applied to real-world control and coordination applications.
Furthermore, in practice, due to constraints on the bandwidth of communication links and the capacity of physical memories, both communication and computation need to be performed assuming finite precision.
For these reasons, researchers have also studied the case when network links can only allow messages of limited length to be transmitted between agents, effectively extending techniques for average consensus towards the direction of quantized consensus.
Various distributed strategies have been proposed, allowing the agents in a network to reach quantized consensus \cite{2007:Aysal_Rabbat, 2012:Lavaei_Murray, 2007:Basar, 2008:Carli_Zampieri, 2016:Chamie_Basar, 2011:Cai_Ishii}. 
Apart from \cite{2016:Chamie_Basar} (which converges in a deterministic manner under a directed communication topology but requires the availability of a set of weights that form a doubly stochastic matrix), these existing strategies use {\em randomized} approaches to address the quantized average consensus problem (implying that all agents reach quantized average consensus with probability one). 
Furthermore, in many types of communication networks it is desirable to update values infrequently to avoid consuming valuable network resources. 
Thus, there has also been an increasing interest for novel event-triggered algorithms for distributed quantized average consensus (and, more generally, distributed control), in order to achieve more efficient usage of network resources \cite{2013:Dimarogonas_Johansson, 2016:nowzari_cortes, 2012:Liu_Chen}.



In this paper, we present a novel distributed average consensus algorithm that combines both of the features mentioned above. 
More specifically, the proposed algorithm assumes that the processing, storing, and exchange of information between neighboring agents is ``event-driven'' and  subject to uniform quantization. 
Following \cite{2007:Basar, 2011:Cai_Ishii} we assume that the states are integer-valued (which comprises a class of quantization effects). 
We note that most work dealing with quantization has concentrated on the scenario where the agents have real-valued states but can transmit only quantized values through limited rate channels (see, e.g., \cite{2008:Carli_Zampieri, 2016:Chamie_Basar}). 
By contrast, our assumption is also suited to the case where the states are stored in digital memories of finite capacity (as in \cite{2009:Nedic, 2007:Basar, 2011:Cai_Ishii}) and the control actuation of each node is event-based, which enables more efficient use of available resources. 
The main contribution of this paper is to propose an algorithm that allows all agents to reach quantized consensus in finite time and appears to outperform the current state-of-the-art distributed algorithms for average consensus under quantized communication on directed communication topologies.


%
%
%
%
\section{PRELIMINARIES}\label{notation}

The sets of real, rational, integer and natural numbers are denoted by $ \mathbb{R}, \mathbb{Q}, \mathbb{Z}$ and $\mathbb{N}$, respectively. 
The symbol $\mathbb{Z}_+$ denotes the set of nonnegative integers and the symbol $\mathbb{N}_0$ denotes the positive natural numbers.

Consider a network of $n$ ($n \geq 2$) agents communicating only with their immediate neighbors. 
The communication topology can be captured by a directed graph (digraph), called \textit{communication digraph}. 
A digraph is defined as $\mathcal{G}_d = (\mathcal{V}, \mathcal{E})$, where $\mathcal{V} =  \{v_1, v_2, \dots, v_n\}$ is the set of nodes (representing the agents of the multi-agent system) and $\mathcal{E} \subseteq \mathcal{V} \times \mathcal{V} - \{ (v_j, v_j) \ | \ v_j \in \mathcal{V} \}$ is the set of edges (self-edges excluded). 
A directed edge from node $v_i$ to node $v_j$ is denoted by $m_{ji} \triangleq (v_j, v_i) \in \mathcal{E}$, and captures the fact that node $v_j$ can receive information from node $v_i$ (but not the other way around). 
We assume that the given digraph $\mathcal{G}_d = (\mathcal{V}, \mathcal{E})$ is \textit{static}\footnote{In this paper we assume that the given digraph is static, however the operation of the proposed protocol can also be extended for jointly connected dynamic topologies (i.e., digraphs whose structure changes over time but their union graphs over consecutive large time intervals remain strongly connected).} (i.e., does not change over time) and \textit{strongly connected} (i.e., for each pair of nodes $v_j, v_i \in \mathcal{V}$, $v_j \neq v_i$, there exists a directed \textit{path} from $v_i$ to $v_j$). 
The subset of nodes that can directly transmit information to node $v_j$ is called the set of in-neighbors of $v_j$ and is represented by $\mathcal{N}_j^- = \{ v_i \in \mathcal{V} \; | \; (v_j,v_i)\in \mathcal{E}\}$, while the subset of nodes that can directly receive information from node $v_j$ is called the set of out-neighbors of $v_j$ and is represented by $\mathcal{N}_j^+ = \{ v_l \in \mathcal{V} \; | \; (v_l,v_j)\in \mathcal{E}\}$. 
The cardinality of $\mathcal{N}_j^-$ is called the \textit{in-degree} of $v_j$ and is denoted by $\mathcal{D}_j^-$ (i.e., $\mathcal{D}_j^- = | \mathcal{N}_j^- |$), while the cardinality of $\mathcal{N}_j^+$ is called the \textit{out-degree} of $v_j$ and is denoted by $\mathcal{D}_j^+$ (i.e., $\mathcal{D}_j^+ = | \mathcal{N}_j^+ |$).

In the proposed distributed protocol we assume that each node is aware of its out-neighbors and can directly (or indirectly\footnote{Indirect transmission could involve broadcasting a message to all out-neighbors while including in the message header the ID of the out-neighbor it is intended for.}) transmit messages to each out-neighbor (but, cannot necessarily receive messages from them). 
Furthermore, each node $v_j$ assigns a nonzero probability $b_{lj}$ to each of its outgoing edges $m_{lj}$ (including a virtual self-edge), where $v_l \in \mathcal{N}^+_j \cup \{ v_j \}$. 
This probability assignment for all nodes can be captured by a column stochastic matrix $\mathcal{B} = [b_{lj}]$. 
A very simple choice would be to set these probabilities to be equal, i.e.,
\begin{align*}
b_{lj} = \left\{ \begin{array}{ll}
         \frac{1}{1 + \mathcal{D}_j^+}, & \mbox{if $v_{l} \in \mathcal{N}_j^+ \cup \{v_j\}$,}\\
         0, & \mbox{otherwise.}\end{array} \right. 
\end{align*}
Each nonzero entry $b_{lj}$ of matrix $\mathcal{B}$ represents the probability of node $v_j$ transmitting towards out-neighbor $v_l \in \mathcal{N}^+_j$ through the edge $m_{lj}$, or performing no transmission\footnote{From the definition of $\mathcal{B} = [b_{lj}]$ we have that $b_{jj} = \frac{1}{1 + \mathcal{D}_j^+}$, $\forall v_j \in \mathcal{V}$. This represents the probability that node $v_j$ will not perform a transmission to any of its out-neighbors $v_l \in \mathcal{N}^+_j$ (i.e., it will transmit to itself).}.

%
%
%
%
\section{PROBLEM FORMULATION}\label{probForm}

Consider a strongly connected digraph $\mathcal{G}_d = (\mathcal{V}, \mathcal{E})$, where each node $v_j \in \mathcal{V}$ has an initial (i.e., for $k=0$) quantized value $y_j[0]$ (for simplicity, we take $y_j[0] \in \mathbb{Z}$). 
In this paper, we develop a distributed algorithm that allows nodes (while processing and transmitting \textit{quantized} information via available communication links between nodes) to eventually obtain, after a finite number of steps, a quantized value $q^s$ which is equal to the ceiling $q^s = \lceil q \rceil$ or the floor $q^s = \lfloor q \rfloor$ of the actual average $q$ of the initial values, where
\begin{equation}\label{real_av}
q = \frac{\sum_{l=1}^{n}{y_l[0]}}{n} .
\end{equation}
Note that $q$ will in general be a real (rational) number.

\begin{remark}
Following \cite{2007:Basar, 2011:Cai_Ishii} we assume that the state variables maintained at each node are integer valued. 
This abstraction subsumes a class of quantization effects (e.g., uniform quantization).
\end{remark}

The quantized average $q^s$ is defined as the ceiling $q^s = \lceil q \rceil$ or the floor $q^s = \lfloor q \rfloor$ of the true average $q$ of the initial values. 
Let $S \triangleq \mathbf{1}^{\rm T} y[0]$, where $\mathbf{1} = [1 \ ... \ 1]^{\rm T}$ is the vector of all ones, and let $y[0] = [y_1[0] \ ... \ y_n[0]]^{\rm T}$ be the vector of the quantized initial values. 
We can write $S$ uniquely as $S = nL + R$ where $L$ and $R$ are both integers and $0 \leq R < n$. 
Thus, we have that either $L$ or $L+1$ may be viewed as an integer approximation of the average of the initial values $S/n$ (which may not be integer in general).

The algorithm we develop are iterative. 
With respect to quantization of information flow, we have that at time step $k \in \mathbb{Z}_+$ (where $\mathbb{Z}_+$ is the set of nonnegative integers), each node $v_j \in \mathcal{V}$ maintains five variables, namely the state variables $y^s_j, z^s_j, q_j^s$, where $y^s_j \in \mathbb{Z}$, $z^s_j \in \mathbb{N}_0$ and $q_j^s \in \mathbb{Z}$ (where $q_j^s = \lfloor \frac{y_j^s}{z_j^s} \rfloor$ or $q_j^s = \lceil \frac{y_j^s}{z_j^s} \rceil$), and the mass variables $y_j, z_j$ where $y_j \in \mathbb{Z}$ and $z_j \in \mathbb{N}$. 
The aggregate states are denoted by $y^s[k] = [y^s_1[k] \ ... \ y^s_n[k]]^{\rm T} \in \mathbb{Z}^n$, $z^s[k] = [z^s_1[k] \ ... \ z^s_n[k]]^{\rm T} \in \mathbb{N}_0^n$, $q^s[k] = [q^s_1[k] \ ... \ q^s_n[k]]^{\rm T} \in \mathbb{Z}^n$ and $y[k] = [y_1[k] \ ... \ y_n[k]]^{\rm T} \in \mathbb{Z}^n$, $z[k] = [z_1[k] \ ... \ z_n[k]]^{\rm T} \in \mathbb{N}^n$ respectively. 

Following the execution of the proposed distributed algorithm, we argue that $\exists \ k_0$ so that for every $k \geq k_0$ we have
\begin{equation}\label{alpha_q}
q^s_j[k] = \lfloor q \rfloor \ \ \ \text{or} \ \ \ q^s_j[k] = \lceil q \rceil
\end{equation}
for every $v_j \in \mathcal{V}$ where $q$, from (\ref{real_av}), is the actual average of the initial values.


%
%
%
%
\section{QUANTIZED AVERAGING ALGORITHM WITH MASS SPLITTING}\label{algorithm}

In this section we propose a probabilistic distributed information exchange process in which the nodes transmit and receive quantized messages so that they reach quantized average consensus on their initial values after a finite number of steps.

\noindent
The operation of the proposed distributed algorithm is summarized below.

\noindent
\textbf{Initialization:}
Each node $v_j$ selects a set of probabilities $\{ b_{lj} \ | \ v_{l} \in \mathcal{N}_j^+ \cup \{v_j\} \}$ such that $0 < b_{lj} < 1$ and $\sum_{v_{l} \in \mathcal{N}_j^+ \cup \{v_j\}} b_{lj} = 1$ (see Section~\ref{notation}). 
Each value $b_{lj}$, represents the probability for node $v_j$ to transmit towards out-neighbor $v_l \in \mathcal{N}^+_j$ (or transmits towards itself), at any given time step (independently between time steps and between different nodes).  
Each node has some initial value $y_j[0] \in \mathbb{Z}$, and also sets its mass variable, for time step $k=0$, as $z_j[0] = 1$. 

\noindent
The iteration involves the following steps:

\noindent
\textbf{Step 1. Event Trigger Condition:}
Node $v_j$ checks the following condition
$$
z_j[k] > 0 . 
$$
If the above condition holds, node $v_j$ sets $z^s_j[k] = z_j[k]$, $y^s_j[k] = y_j[k]$ and 
$$
q^s_j[k] = \Bigl \lfloor \frac{y^s_j[k]}{z^s_j[k]} \Bigr \rfloor . 
$$
Then, it splits $y_j[k]$ in $z_j[k]$ equal pieces (or with maximum difference between them equal to $1$) , which we denote by $y^{(t)}_j[k]$, $t = 1, 2, ..., z_j[k]$. 
Specifically, node $v_j$ sets $y^{(t)}_j[k] = \lfloor y_j[k] / z_j[k] \rfloor$ (or $y^{(t)}_j[k] = \lceil y_j[k] / z_j[k] \rceil$) and $z^{(t)}_j[k] = 1$ (with $t$ taking integer values from $1$ to $z_j[k]$) so that $\sum_{t = 1}^{z_j[k]} y^{(t)}_j[k] = y_j[k]$ and $\sum_{t = 1}^{z_j[k]} z^{(t)}_j[k] = z_j[k]$. 
Furthermore, an additional requirement in this splitting is that the difference between $y^{(t)}_j[k]$ for different values of $t$ is equal to $0$ or $1$ (i.e., $\vert y^{(t)}_j[k] - y^{(t')}_j[k] \vert \leq 1$, for $t, t' \in \{ 1, 2, ..., z_j[k] \}$). 

\noindent
\textbf{Step 2. Transmitting:} If the ``Event Trigger Conditions'' above hold, for each set of values $y^{(t)}_j[k]$, $z^{(t)}_j[k]$, node $v_j$ uses the nonzero probabilities $b_{lj}$ (assigned by node $v_j$ during the initialization step), in order to transmit $y^{(t)}_j[k]$, $z^{(t)}_j[k]$ towards out-neighbor $v_l \in \mathcal{N}_j^+$ or towards itself. 
Each time, it chooses an out-neighbor or itself randomly, independently from other values of $t$, other nodes, or previous time steps.

\noindent
\textbf{Step 3. Receiving:} Each node $v_j$ receives messages $y^{(t)}_i[k]$ and $z^{(t)}_i[k]$ from its in-neighbors $v_i \in \mathcal{N}_j^-$, and it sums them along with any of its own stored messages (i.e., the sets of values it transmitted to itself) as
$$
y_j[k+1] = \sum_{v_i \in \mathcal{N}_j^- \cup \{v_j\}} \sum_{t = 1}^{z_i[k]} w^{(t)}_{ji}[k] \ y^{(t)}_i[k] ,
$$
and 
$$
z_j[k+1] = \sum_{v_i \in \mathcal{N}_j^- \cup \{v_j\}} \sum_{t = 1}^{z_i[k]} w^{(t)}_{ji}[k] \ z^{(t)}_i[k] ,
$$
where $w^{(t)}_{ji}[k] = 0$ if split message $t$ was not sent to node $v_j$ from in-neighbor $v_i \in \mathcal{N}_j^-$; otherwise $w^{(t)}_{ji}[k] = 1$. 
Then, $k$ is set to $k+1$ and the iteration repeats (it goes back to Step~1).

\begin{remark}
Although not discussed in this paper, asynchronous operation is not an issue for the proposed probabilistic distributed protocol. 
Moreover, communication disturbances such as (time-varying and inhomogeneous) time delays, that might affect transmissions between different agents in the network, may also be addressed.
\end{remark}

The probabilistic quantized mass transfer process is detailed as Algorithm~\ref{algorithm_prob} below (for the case when $b_{lj} = 1/(1+\mathcal{D}_j^+)$ for $v_l \in \mathcal{N}_j^+ \cup \{ v_j \}$ and $b_{lj}=0$ otherwise). 
We next provide an example to illustrate the operation of the proposed distributed protocol.

\noindent
\vspace{-0.5cm}    
\begin{varalgorithm}{1}
\caption{Quantized Average Consensus via Mass Splitting}
\textbf{Input} 
\\ 1) A strongly connected digraph $\mathcal{G}_d = (\mathcal{V}, \mathcal{E})$ with $n=|\mathcal{V}|$ nodes and $m=|\mathcal{E}|$ edges. 
\\ 2) For every $v_j$ we have $y_j[0] \in \mathbb{Z}$. 
\\
\textbf{Initialization} 
\\ Every node $v_j \in \mathcal{V}$: 
\\ 1) Assigns a nonzero probability $b_{lj}$ to each of its outgoing edges $m_{lj}$, where $v_l \in \mathcal{N}^+_j \cup \{v_j\}$, as follows  
\begin{align*}
b_{lj} = \left\{ \begin{array}{ll}
         \frac{1}{1 + \mathcal{D}_j^+}, & \mbox{if $l = j$ or $v_{l} \in \mathcal{N}_j^+$,}\\
         0, & \mbox{if $l \neq j$ and $v_{l} \notin \mathcal{N}_j^+$.}\end{array} \right. 
\end{align*}
\\ 2) Sets $z_j[0] = 1$. 
\\
\textbf{Iteration}
\\ For $k=0,1,2,\dots$, each node $v_j \in \mathcal{V}$ does the following:
\\ 1) \underline{Event Trigger Condition:} If the following condition holds,
\begin{equation}\nonumber
z_j[k] > 0,
\end{equation}
it performs the following two steps: 
\\ a) It sets $z^s_j[k] = z_j[k]$, $y^s_j[k] = y_j[k]$, which means that 
$$
q^s_j[k] = \Bigl \lfloor \frac{y^s_j[k]}{z^s_j[k]} \Bigr \rfloor \ . 
$$
Then, for $t \in \{1, 2, ..., z_j[k]\}$, it sets $y^{(t)}_j[k] = \lfloor y_j[k] / z_j[k] \rfloor$ and $z^{(t)}_j[k] = 1$. 
If $r \equiv y_j[k] - z_j[k] \lfloor y_j[k]/z_j[k] \rfloor$ is nonzero, then node $v_j$ increases by one the value of $y_j^{(t)}[k]$, $t = 1, 2, ..., r$, so that $\sum_{t = 1}^{z_j[k]} y^{(t)}_j[k] = y_j[k]$ and $\sum_{t = 1}^{z_j[k]} z^{(t)}_j[k] = z_j[k]$. 
Furthermore, for $t, t' \in \{ 1, 2, ..., z_j[k] \}$ it also holds that $\vert y^{(t)}_j[k] - y^{(t')}_j[k] \vert \leq 1$. 
\\ b) For each $t \in \{1, 2, ..., z_j[k]\}$, it transmits the set of values $y^{(t)}_j[k]$, $z^{(t)}_j[k]$ towards a randomly chosen out-neighbour $v_l \in \mathcal{N}_j^+$ or towards itself.
\\ 2) It receives $y^{(t)}_i[k]$ and $z^{(t)}_i[k]$ from its in-neighbours $v_i \in \mathcal{N}_j^-$ and from itself and sets 
$$
y_j[k+1] = \sum_{v_i \in \mathcal{N}_j^- \cup \{v_j\}} \sum_{t = 1}^{z_i[k]} w^{(t)}_{ji}[k] \ y^{(t)}_i[k] ,
$$
and 
$$
z_j[k+1] = \sum_{v_i \in \mathcal{N}_j^- \cup \{v_j\}} \sum_{t = 1}^{z_i[k]} w^{(t)}_{ji}[k] \ z^{(t)}_i[k] ,
$$
where $w^{(t)}_{ji}[k] = 1$ if node $v_j$ receives value $y^{(t)}_i[k]$ and $z^{(t)}_i[k]$ from node $v_i$ at iteration $k$ (otherwise $w^{(t)}_{ji}[k] = 0$). 
\\ 3) It repeats (increases $k$ to $k + 1$ and goes back to Step~1).
\label{algorithm_prob}
\end{varalgorithm}


\begin{example}\label{Ex1}

Consider the strongly connected digraph $\mathcal{G}_d = (\mathcal{V}, \mathcal{E})$ shown in Fig.~\ref{prob_example} (borrowed from \cite{2018:RikosHadj_CDC}), with $\mathcal{V} = \{ v_1, v_2, v_3, v_4 \}$ and $\mathcal{E} = \{ m_{21}, m_{31}, m_{42}, m_{13}, m_{23}, m_{34} \}$, where each node has initial quantized values $y_1[0] = 5$, $y_2[0] = 3$, $y_3[0] = 7$, and $y_4[0] = 2$ respectively. 
The actual average $q$ of the initial values of the nodes, is equal to $q = 4.25$ which means that the quantized value $q^s$ is equal to $q^s = 4$ or $q^s = 5$ (i.e., the ceiling or the floor of the average $q$).

\begin{figure}[h]
\begin{center}
\includegraphics[width=0.25\columnwidth]{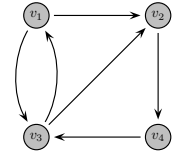}
\caption{Example of digraph for probabilistic quantized averaging.}
\label{prob_example}
\end{center}
\end{figure}

Each node $v_j \in \mathcal{V}$ follows the Initialization steps ($1-2$) in Algorithm~\ref{algorithm_prob}, assigning to each of its outgoing edges $v_l \in \mathcal{N}_j^+ \cup \{v_j\}$ a nonzero probability value $b_{lj}$ equal to $b_{lj} = \frac{1}{1 + \mathcal{D}_j^+}$. 
The assigned values can be seen in the following matrix
\[
\mathcal{B}
=
\begin{bmatrix}
    \frac{1}{3} & 0 & \frac{1}{3} &  0\\ \vspace{-0.35cm} \\
    \frac{1}{3} & \frac{1}{2} & \frac{1}{3} &  0\\ \vspace{-0.35cm} \\
    \frac{1}{3} & 0 & \frac{1}{3} &  \frac{1}{2}\\ \vspace{-0.35cm} \\
    0 & \frac{1}{2} & 0 &  \frac{1}{2}\\ \vspace{-0.35cm} \\
\end{bmatrix}
.
\] 
Furthermore, each node $v_j \in \mathcal{V}$ sets $z_j[0] = 1$.

For the execution of the proposed algorithm, at time step $k=0$, each node $v_j$ calculates its state variables $y^s_j[0]$, $z^s_j[0]$ and $q^s_j[0]$. 
The mass and state variables for $k=0$ are shown in Table~\ref{tableProb}. 
Then, every node $v_j$ calculates the values it will transmit.
Specifically, nodes $v_1, v_2, v_3, v_4$ set $y^{(1)}_1[0] = 5$, $y^{(1)}_2[0] = 3$, $y^{(1)}_3[0] = 7$, $y^{(1)}_4[0] = 2$ and $z^{(1)}_1[0] = 1$, $z^{(1)}_2[0] = 1$, $z^{(1)}_3[0] = 1$, $z^{(1)}_4[0] = 1$, respectively. 
Then, suppose that nodes $v_1$, $v_3$ and $v_4$ transmit to nodes $v_2$, $v_1$ and $v_3$, respectively, whereas node $v_2$, performs no transmission (i.e., transmits to itself).

For the execution of the proposed algorithm, each node $v_j$ receives from its in-neighbors $v_i \in \mathcal{N}_j^- \cup \{v_j\}$ the transmitted mass variables $y_i[0]$ and $z_i[0]$ and then, at time step $k=1$, it calculates its state variables $y^s_j[1]$, $z^s_j[1]$ and $q^s_j[1]$. 
The mass and state variables for $k=1$ are shown in Table~\ref{tableProbk_1}. 
Here we have that nodes $v_1$ and $v_3$ have mass variables $y_1[1] = y_3[0] = 7$, $z_1[1] = z_3[0] = 1$ and $y_3[1] = y_4[0] = 2$, $z_3[1] = z_4[0] = 1$ (and update their state variables), while node $v_2$ has mass variables $y_2[1] = y_1[0] + y_2[0] = 8$, $z_2[1] = z_1[0] + z_2[0] = 2$ (also updating its state variables). 
Then, every node $v_j$ calculates the values it will transmit (notice that node $v_2$ will split its mass variable $y_2[1]$ in two equal pieces since $z_2[1] = 2$). 
Specifically, we have that $y^{(1)}_1[1] = 7$, $y^{(1)}_2[1] = 4$, $y^{(2)}_2[1] = 4$, $y^{(1)}_4[1] = 2$ and $z^{(1)}_1[1] = 1$, $z^{(1)}_2[1] = 1$, $z^{(2)}_2[1] = 1$, $z^{(1)}_4[1] = 1$, respectively. 
Then, suppose that nodes $v_1$ and $v_3$ both transmit to node $v_2$, while node $v_2$, transmits the set of values $y^{(1)}_2[1]$, $z^{(1)}_2[1]$ to itself and the set of values $y^{(2)}_2[1]$, $z^{(2)}_2[1]$ to $v_4$.

\begin{center}
\captionof{table}{Initial Mass and State Variables for Fig.~\ref{prob_example}}
\label{tableProb}
{\small 
\begin{tabular}{|c||c|c|c|c|c|}
\hline
Nodes &\multicolumn{5}{c|}{Mass and State Variables for $k=0$}\\
$v_j$ &$y_j[0]$&$z_j[0]$&$y^s_j[0]$&$z^s_j[0]$&$q^s_j[0]$\\
\cline{2-6}
 &  &  &  &  & \\
$v_1$ & 5 & 1 & 5 & 1 & 5\\
$v_2$ & 3 & 1 & 3 & 1 & 3\\
$v_3$ & 7 & 1 & 7 & 1 & 7\\
$v_4$ & 2 & 1 & 2 & 1 & 2\\
\hline
\end{tabular}
}
\end{center}
\vspace{0.4cm}

\begin{center}
\captionof{table}{Mass and State Variables for Fig.~\ref{prob_example} for $k=1$}
\label{tableProbk_1}
{\small 
\begin{tabular}{|c||c|c|c|c|c|}
\hline
Nodes &\multicolumn{5}{c|}{Mass and State Variables for $k=1$}\\
$v_j$ &$y_j[1]$&$z_j[1]$&$y^s_j[1]$&$z^s_j[1]$&$q^s_j[1]$\\
\cline{2-6}
 &  &  &  &  & \\
$v_1$ & 7 & 1 & 7 & 1 & 7\\
$v_2$ & 8 & 2 & 8 & 2 & 4\\
$v_3$ & 2 & 1 & 2 & 1 & 2\\
$v_4$ & 0 & 0 & 2 & 1 & 2\\
\hline
\end{tabular}
}
\end{center}
\vspace{0.2cm}

Each node $v_j$ receives from its in-neighbors the transmitted mass variables and, at time step $k=2$, it calculates its state variables $y^s_j[2]$, $z^s_j[2]$ and $q^s_j[2]$ (which are shown in Table~\ref{tableProbk_2}). 
Then, every node $v_j$ calculates the values it will transmit as $y^{(1)}_2[2] = 4$, $y^{(2)}_2[2] = 4$, $y^{(3)}_2[2] = 5$, $y^{(1)}_4[1] = 4$ and $z^{(1)}_2[2] = 1$, $z^{(2)}_2[2] = 1$, $z^{(3)}_2[2] = 1$, $z^{(1)}_4[1] = 1$, respectively. 
It is interesting to notice here that all the calculated values $y^{(t)}_j[2]$ are equal to the quantized average of the initial values (i.e., the ceiling or the floor of the real average $q = 4.25$). 
Then, suppose that node $v_4$ transmits to node $v_3$, while node $v_2$, transmits the set of values $y^{(1)}_2[2]$, $z^{(1)}_2[2]$ and $y^{(2)}_2[2]$, $z^{(2)}_2[2]$ to $v_4$ and the set of values $y^{(3)}_2[2]$, $z^{(3)}_2[2]$ to itself.

\begin{center}
\captionof{table}{Mass and State Variables for Fig.~\ref{prob_example} for $k=2$} 
\label{tableProbk_2}
{\small 
\begin{tabular}{|c||c|c|c|c|c|}
\hline
Nodes &\multicolumn{5}{c|}{Mass and State Variables for $k=2$}\\
$v_j$ &$y_j[2]$&$z_j[2]$&$y^s_j[2]$&$z^s_j[2]$&$q^s_j[2]$\\
\cline{2-6}
 &  &  &  &  & \\
$v_1$ & 0 & 0 & 7 & 1 & 7\\
$v_2$ & 13 & 3 & 13 & 3 & 4\\
$v_3$ & 0 & 0 & 2 & 1 & 2\\
$v_4$ & 4 & 1 & 4 & 1 & 4\\
\hline
\end{tabular}
}
\end{center}
\vspace{0.2cm}

Each node $v_j$ receives from its in-neighbors the transmitted mass variables and, at time step $k=3$, it calculates its state variables $y^s_j[3]$, $z^s_j[3]$ and $q^s_j[3]$ (which are shown in Table~\ref{tableProbk_3}). 
Then, every node $v_j$ calculates the values it will transmit as $y^{(1)}_2[3] = 5$, $y^{(1)}_3[3] = 4$, $y^{(1)}_4[3] = 4$, $y^{(2)}_4[3] = 4$ and $z^{(1)}_2[3] = 1$, $z^{(1)}_3[3] = 1$, $z^{(1)}_4[3] = 1$, $z^{(2)}_4[3] = 1$, respectively. 
Then, suppose that nodes $v_2$ and $v_3$ transmit to node $v_1$ and $v_4$, while node $v_4$, transmits the set of values $y^{(1)}_4[3]$, $z^{(1)}_4[3]$ and $y^{(2)}_4[3]$, $z^{(2)}_4[3]$ to node $v_3$.

Next, each node $v_j$ receives from its in-neighbors the transmitted mass variables and, at time step $k=4$, it calculates its state variables $y^s_j[4]$, $z^s_j[4]$ and $q^s_j[4]$ which are shown in Table~\ref{tableProbk_4}.

\begin{center}
\captionof{table}{Mass and State Variables for Fig.~\ref{prob_example} for $k=3$}
\label{tableProbk_3}
{\small 
\begin{tabular}{|c||c|c|c|c|c|}
\hline
Nodes &\multicolumn{5}{c|}{Mass and State Variables for $k=3$}\\
$v_j$ &$y_j[3]$&$z_j[3]$&$y^s_j[3]$&$z^s_j[3]$&$q^s_j[3]$\\
\cline{2-6}
 &  &  &  &  & \\
$v_1$ & 0 & 0 & 7 & 1 & 7\\
$v_2$ & 5 & 1 & 5 & 1 & 5\\
$v_3$ & 4 & 1 & 4 & 1 & 4\\
$v_4$ & 8 & 2 & 8 & 2 & 4\\
\hline
\end{tabular}
}
\end{center}
\vspace{0.4cm}

\begin{center}
\captionof{table}{Mass and State Variables for Fig.~\ref{prob_example} for $k=4$}
\label{tableProbk_4}
{\small 
\begin{tabular}{|c||c|c|c|c|c|}
\hline
Nodes &\multicolumn{5}{c|}{Mass and State Variables for $k=4$}\\
$v_j$ &$y_j[4]$&$z_j[4]$&$y^s_j[4]$&$z^s_j[4]$&$q^s_j[4]$\\
\cline{2-6}
 &  &  &  &  & \\
$v_1$ & 4 & 1 & 4 & 1 & 4\\
$v_2$ & 5 & 1 & 5 & 1 & 5\\
$v_3$ & 8 & 2 & 8 & 2 & 4\\
$v_4$ & 0 & 0 & 8 & 2 & 4\\
\hline
\end{tabular}
}
\end{center}
\vspace{0.2cm}

From Table~\ref{tableProbk_4}, we can see that for $k \geq 4$ it holds that 
$$
q_j^s[k] = \lfloor q \rfloor = 4 , \ \ \ \text{or} \ \ \ q_j^s[k] = \lceil q \rceil = 5 ,
$$
for every $v_j \in \mathcal{V}$, which means that every node $v_j$ obtained, after a finite number of iterations, a quantized value $q_j^s$, which is equal to the ceiling or the floor of the real average $q$ of the initial values of the nodes. 
The state variable $q_j^s[k]$ of every node $v_j \in \mathcal{V}$ can also be seen in Figure~\ref{example_plot}, in which we can see that, after a finite number of time steps $k$, it holds that $q_j^s[k] = 4$ or $q_j^s[k] = 5$. 
\end{example}

\begin{figure} [ht]
\centering
\includegraphics[width=70mm]{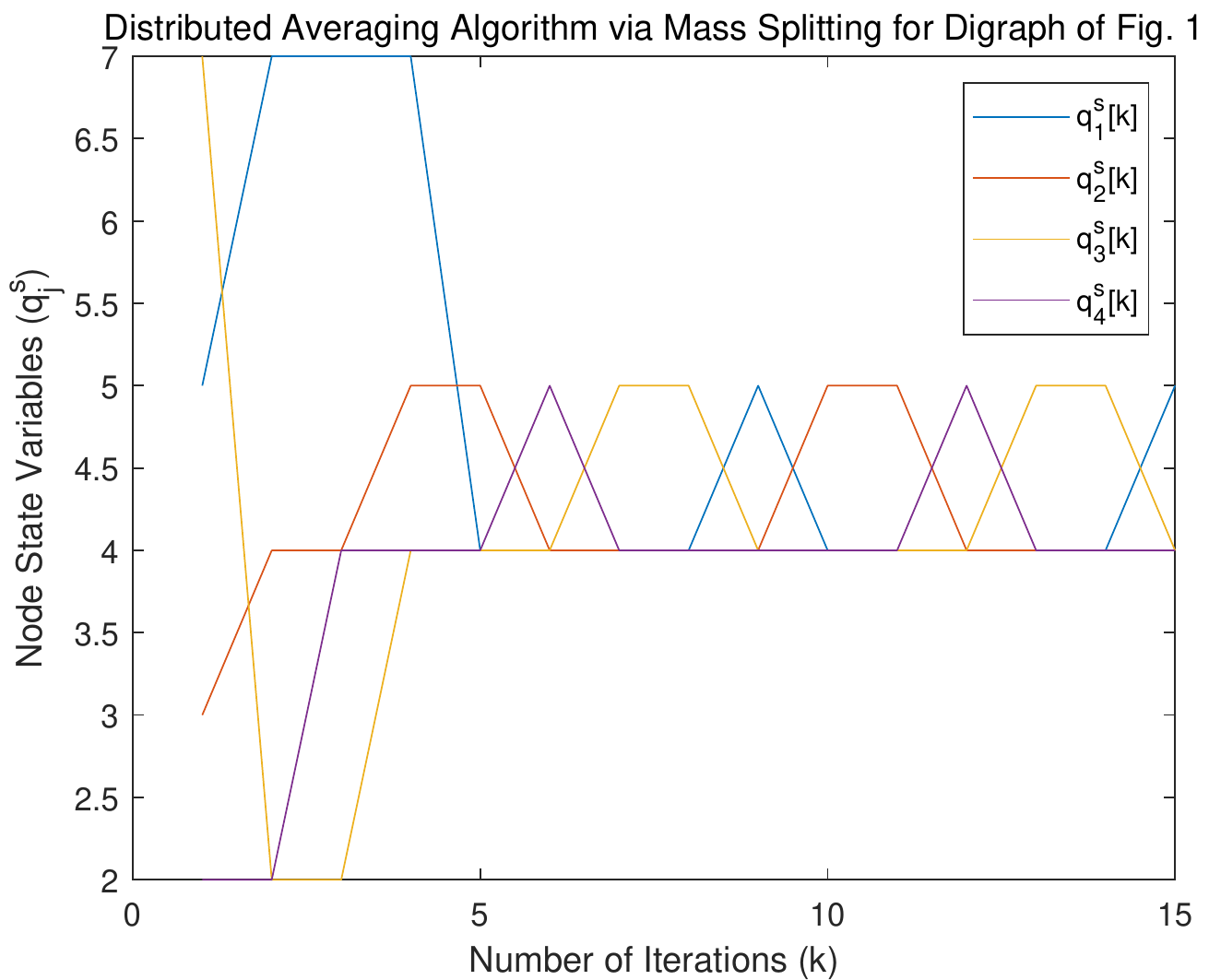}
\caption{Node state variables plotted against the number of iterations for Algorithm~\ref{algorithm_prob} for the digraph shown in Fig.~\ref{prob_example}.}
\label{example_plot}
\end{figure}

\begin{remark}
Notice that the operation of Algorithm~\ref{algorithm_prob} is different from the algorithms presented in \cite{2018:RikosHadj_CDC}. 
Specifically, in \cite{2018:RikosHadj_CDC}, the authors presented two distributed algorithms (a probabilistic and a deterministic algorithm) in which every node $v_j$ ``merged'' (i.e., added) the incoming mass variables (which remained ``merged'' through the algorithm execution), sent by its in-neighbours. 
The authors showed that every node $v_j$ calculated, after a finite number of time steps, a quantized fraction which is equal to the actual average $q$ of the initial values of the nodes (i.e., there was zero quantization error), but due to strict accumulation of the values, the proposed protocol required a significant amount of time steps. 
During the operation of Algorithm~\ref{algorithm_prob}, every node $v_j$, is able to calculate, after a finite number of steps, a quantized value which is equal to the ceiling or the floor of the initial average (i.e., there is a nonzero quantization error defined as the difference between the actual average $q$ and the quantized average $q^s$), but, as we will see in the following sections, its operation outperforms (in terms of convergence speed) the ones presented in \cite{2018:RikosHadj_CDC} along with the state-of-the-art algorithms in the available literature. 
\end{remark}

\section{CONVERGENCE OF MASS SPLITTING ALGORITHM}\label{CONValgorithm}

We are now ready to prove that, during the operation of Algorithm~\ref{algorithm_prob}, each agent $v_j$ reaches, after a finite number of time steps, a consensus value which is equal to the ceiling or the floor of the actual average $q$ of the initial values of the nodes. 
We present the following proposition which is necessary for our subsequent development. 
Due to space limitations, we do not provide the proof for Proposition~\ref{PROP1_prob} below; it will be made available in an extended
version of this paper.

\begin{prop}
\label{PROP1_prob}
Consider a strongly connected digraph $\mathcal{G}_d = (\mathcal{V}, \mathcal{E})$ with $n=|\mathcal{V}|$ nodes and $m=|\mathcal{E}|$ edges.
Suppose that each node assigns a nonzero probability $b_{lj}$ to each of its outgoing edges $m_{lj}$, where $v_l \in \mathcal{N}^+_j \cup \{v_j\}$, as follows  
\begin{align*}
b_{lj} = \left\{ \begin{array}{ll}
         \frac{1}{1 + \mathcal{D}_j^+}, & \mbox{if $l = j$ or $v_{l} \in \mathcal{N}_j^+$,}\\
         0, & \mbox{if $l \neq j$ and $v_{l} \notin \mathcal{N}_j^+$,}\end{array} \right. 
\end{align*}
and, at time step $k=0$, node $v_j$ holds a ``token" while the other nodes $v_l \in \mathcal{V} - \{ v_j \}$ do not. 
Each node $v_j$ transmits the ``token'' (if it has it, otherwise it performs no transmission) according to the nonzero probability $b_{lj}$ it assigned to its outgoing edges $m_{lj}$. 
The probability that the token is at node $v_i$ after $n-1$ time steps satisfies 
\begin{equation}\label{lowerProf}
\text{P}_{\text{Token at node $v_i$ at step $n-1$}} \geq (1+D^+_{max})^{-(n-1)} , 
\end{equation}
where $D^+_{max} = \max_{v_j \in \mathcal{V}} D^+_{j}$. 
\end{prop}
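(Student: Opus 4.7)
The plan is to exhibit a single, explicit trajectory of the token that ends at $v_i$ at time $n-1$, and to lower bound its probability by $(1+D^+_{max})^{-(n-1)}$. Since the total probability over all trajectories terminating at $v_i$ is at least that of any single one, this immediately yields the claim.

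First, I would invoke strong connectivity: because $\mathcal{G}_d$ is strongly connected on $n$ vertices, there exists a directed path from $v_j$ to $v_i$ of length $\ell \leq n-1$, say $v_j = v_{i_0} \to v_{i_1} \to \cdots \to v_{i_\ell} = v_i$ (with the convention $\ell = 0$ when $v_i = v_j$). This is a standard fact: in a strongly connected digraph on $n$ nodes, the diameter is at most $n-1$.

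Next, I would compute the probability that the token follows this specific trajectory during the first $\ell$ time steps and then idles at $v_i$ (through self-transmissions) for the remaining $n-1-\ell$ steps. By the independence of the transmission choices across time steps, stated in the protocol, this probability factorizes as
$$
\left(\prod_{t=0}^{\ell-1} b_{i_{t+1}, i_{t}}\right) \cdot b_{i,i}^{\, n-1-\ell} = \left(\prod_{t=0}^{\ell-1} \frac{1}{1+D^+_{i_t}}\right) \cdot \left(\frac{1}{1+D^+_{i}}\right)^{n-1-\ell}.
$$
Each factor is at least $(1+D^+_{max})^{-1}$ by definition of $D^+_{max}$, so the product is at least $(1+D^+_{max})^{-(n-1)}$. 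Since this trajectory is one of possibly several ways for the token to reside at $v_i$ at time $n-1$, the total probability $\mathrm{P}_{\text{Token at node } v_i \text{ at step } n-1}$ is lower bounded by $(1+D^+_{max})^{-(n-1)}$, which is (\ref{lowerProf}).

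No step is a serious obstacle; the only subtlety worth flagging is that the length $\ell$ of the path may vary with the choice of $v_i$, which is why I pad the trajectory with self-loops at $v_i$ to make every considered realization have exactly $n-1$ transitions. The equal exponent $n-1$ across all nodes $v_i$ is what gives a uniform bound, and the existence of a self-loop probability $b_{ii} = \frac{1}{1+D^+_i} > 0$ at every node is essential for this padding to work.
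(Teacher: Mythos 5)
Your argument is correct and complete: strong connectivity gives a directed path of length $\ell \leq n-1$ from the token's initial node to $v_i$, the self-transmission probability $b_{ii} = \frac{1}{1+\mathcal{D}_i^+} > 0$ lets you pad to exactly $n-1$ steps, and bounding each of the $n-1$ independent transition factors below by $(1+D^+_{max})^{-1}$ yields the claimed bound on the probability of this single trajectory, hence on the event itself. The paper itself omits the proof of this proposition (deferring it to an extended version), so there is nothing to compare against, but your single-trajectory lower bound is the standard argument and is surely the intended one.
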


\begin{prop}
\label{PROP2_prob}
Consider a strongly connected digraph $\mathcal{G}_d = (\mathcal{V}, \mathcal{E})$ with $n=|\mathcal{V}|$ nodes and $m=|\mathcal{E}|$ edges and $z_j[0] = 1$ and $y_j[0] \in \mathbb{Z}$ for every node $v_j \in \mathcal{V}$ at time step $k=0$. 
Suppose that each node $v_j \in \mathcal{V}$ follows the Initialization and Iteration steps as described in Algorithm~\ref{algorithm_prob}. 
With probability one, there exists $k_0 \in \mathbb{Z}_+$, so that for every $k \geq k_0$ we have 
$$
q^s_j[k] = \lceil q \rceil \ \ \text{or} \ \ q^s_j[k] = \lfloor q \rfloor ,
$$
for every $v_j \in \mathcal{V}$ (i.e., for $k \geq k_0$ every node $v_j$ has calculated the ceiling or the floor of the actual average $q$ of the initial values). 
\end{prop}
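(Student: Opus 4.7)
The plan is to combine a deterministic monotonicity property for the extremal mass ratios with the single-token hitting bound of Proposition~\ref{PROP1_prob}. First I would establish two invariants. \emph{Mass conservation}: at every active node the splitting breaks $y_j[k]$, $z_j[k]$ into pieces whose components sum back to $y_j[k]$, $z_j[k]$, and each piece is delivered without loss (possibly to the sender itself), so $\sum_j y_j[k]=S$ and $\sum_j z_j[k]=n$ for every $k$. \emph{Monotonicity of extremal ratios}: defining
\[
M[k] \triangleq \max_{j:\, z_j[k]>0}\Bigl\lceil \tfrac{y_j[k]}{z_j[k]}\Bigr\rceil, \qquad m[k] \triangleq \min_{j:\, z_j[k]>0}\Bigl\lfloor \tfrac{y_j[k]}{z_j[k]}\Bigr\rfloor,
\]
the splitting rule guarantees that every piece produced at step $k$ carries a $y$-value in $[m[k],M[k]]$ and a $z$-value equal to one, so any node active at step $k+1$ aggregates pieces whose $y$-values lie in $[m[k],M[k]]$. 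Hence $m[k+1]\geq m[k]$ and $M[k+1]\leq M[k]$, and both sequences, being bounded monotone integer sequences, eventually stabilize.

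Next I would view the execution as $n$ unit-$z$ ``tokens'' (consistent because $\sum_j z_j\equiv n$) performing independent random walks on $\mathcal{G}_d$ whose transitions are governed precisely by the probabilities $b_{lj}$ of the algorithm (the independence being a consequence of the per-piece independent transmission choices). Proposition~\ref{PROP1_prob} then applies in parallel to each token: for any fixed target $v_\ell\in\mathcal{V}$ and any starting configuration, the conditional probability that all $n$ tokens are co-located at $v_\ell$ at time $k+(n-1)$ is at least
\[
p^* \triangleq \bigl((1+D^+_{max})^{-(n-1)}\bigr)^{n} > 0.
\]
When this ``mass concentration'' event first occurs at some time $k_1$, we have $z_\ell[k_1]=n$ and $y_\ell[k_1]=S$, hence $M[k_1]-m[k_1]\leq \lceil S/n\rceil - \lfloor S/n\rfloor \leq 1$. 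Because the conditional success probability in each disjoint window of length $n-1$ is uniformly at least $p^*$, a standard geometric/Borel--Cantelli argument shows $k_1<\infty$ almost surely, and by the monotonicity step $M[k]-m[k]\leq 1$ for all $k\geq k_1$.

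Finally, for $k\geq k_1$ every active node satisfies $y_j[k]/z_j[k]\in[m[k],M[k]]$ with gap at most one. Combining $\sum_j y_j[k]=nq$ with $\sum_j z_j[k]=n$, a short case analysis (in particular, ruling out the combination $M[k]=m[k]+1$ with $q\in\mathbb{Z}$, and ruling out $M[k]=m[k]$ with $q\notin\mathbb{Z}$) forces $\{m[k],M[k]\}\subseteq\{\lfloor q\rfloor,\lceil q\rceil\}$, so $q_j^s[k]=\lfloor y_j[k]/z_j[k]\rfloor\in\{\lfloor q\rfloor,\lceil q\rceil\}$ whenever $v_j$ is active at step $k\geq k_1$. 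To cover nodes that are not currently active, I would apply Proposition~\ref{PROP1_prob} once more: for any $v_j$ the conditional probability that at least one token visits $v_j$ within $n-1$ steps is uniformly positive, so by Borel--Cantelli every node is active infinitely often almost surely; setting $k_0$ to be the maximum over $v_j\in\mathcal{V}$ of the first activation time of $v_j$ exceeding $k_1$ gives the desired $k_0$. I expect the probabilistic collapse step to be the main obstacle: one must justify combining the single-token bound across tokens (which hinges on the per-piece independent transmission choices mandated by Algorithm~\ref{algorithm_prob}) and then upgrade the uniform per-window lower bound into an almost-sure finite-time statement despite correlations among successive windows.
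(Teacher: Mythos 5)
Your proposal is correct and follows essentially the same route as the paper's proof: interpret the $n$ unit-$z$ masses as tokens performing independent random walks, apply Proposition~\ref{PROP1_prob} per token to lower-bound the probability that all tokens co-locate within a window of $n-1$ steps, drive the failure probability to zero geometrically over disjoint windows, and then wait for every node to be visited so that its state variable updates to $\lfloor q\rfloor$ or $\lceil q\rceil$. Your explicit monotonicity invariant for the extremal ratios $m[k]$, $M[k]$ is a welcome addition that rigorously justifies the persistence of the consensus property after the first co-location time, a point the paper's proof treats only informally.
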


\begin{proof}
During the Initialization Step~$1$ of Algorithm~\ref{algorithm_prob}, each node $v_j \in \mathcal{V}$ assigns a nonzero probability $b_{lj} = \frac{1}{1 + \mathcal{D}_j^+}$ to each of its outgoing edges $m_{lj}$, where $v_l \in \mathcal{N}^+_j \cup \{v_j\}$. 
We can consider the digraph $\mathcal{G}_d = (\mathcal{V}, \mathcal{E})$ with associated transition matrix $\mathcal{B} = [b_{lj}]$ as a Markov chain in which the nodes of the graph are equivalent to the states of the Markov chain and the weight $b_{lj}$ of matrix $\mathcal{B}$ represents the probability of a transition from node $v_j$ towards node $v_l$. 
It is important to notice that during Iteration Steps~$1$ and $2$, each node $v_j$, splits the received messages $y_j[k]$, $z_j[k]$ into $z_j[k]$ equal (or with maximum difference equal to $1$) pieces $y^{(t)}_j[k]$, $z^{(t)}_j[k]$, where $y^{(t)}_j[k] \in \mathbb{Z}$ and $z^{(t)}_j[k] = 1$ for $t = 1, 2, ..., z_j[k]$. 
Then it transmits each set of messages $y^{(t)}_j[k]$, $z^{(t)}_j[k]$ towards a randomly chosen out-neighbour $v_l \in \mathcal{N}_j^+ \cup \{v_j\}$ according to the nonzero probabilities $b_{lj}$ (assigned during the initialization step). 
This means that the operation of the Algorithm~\ref{algorithm_prob} can be interpreted as the ``random walk'' of $n$ ``tokens'' in a Markov chain, where $n=|\mathcal{V}|$, and each `token contains a set of values $y[k]$, $z[k]$, for which $y[k] \in \mathbb{Z}$ and $z[k] = 1$, during each time step $k$. 

During the operation of Algorithm~\ref{algorithm_prob}, from Iteration Step~$1$, we have that if two ``tokens'' meet in the same node (say $v_j$), during time step $k$, then their values $y[k]$ become equal (or with maximum difference equal to $1$). Furthermore, the sum of the $y_j[k]$ values at any given $k$ is equal to the initial sum (i.e., $\sum_{j=1}^n y_j[k] = \sum_{j=1}^n y_j[0]$). 
Thus, we will focus on the scenario in which all $n$ tokens meet at a common node and obtain equal values $y[k]$ (or with maximum difference between them equal to $1$). 

From Proposition~\ref{PROP1_prob}, we have that after $n-1$ time steps, the probability that {\em one} ``token'' is at node $v_i$ is 
$$
\text{P}_{\text{Token at node $v_i$ at step $n-1$}} \geq (1+D^+_{max})^{-(n-1)} .
$$
Considering that, during the operation of Algorithm~\ref{algorithm_prob}, the $n$ ``tokens'' perform {\em independent} random walks we have that the probability that all $n$ tokens meet at node $v_i$ after $n-1$ time steps is
$$
\text{P}_{\text{All tokens at node $v_i$ at step $n-1$}} \geq (1+D^+_{max})^{-n(n-1)} .
$$
Furthermore, since the events ``all tokens meet at node $v_i$ after $n-1$ time steps'' and ``all tokens meet at node $v_j$ after $n-1$ time steps'' are mutually exclusive (i.e., they have a zero intersection) then we have that the probability that all tokens meet at {\em any} node $v_j \in \mathcal{V}$ after $n-1$ time steps is
\begin{eqnarray}\nonumber
\text{P}_{\text{All tok. at node $v_i$ at step $n-1$}} & \geq & \sum_{v_j \in \mathcal{V}} (1+D^+_{max})^{-n(n-1)} \Rightarrow \nonumber \\
\text{P}_{\text{All tok. at node $v_i$ at step $n-1$}} & \geq & n (1+D^+_{max})^{-n(n-1)} .   \nonumber \;
\end{eqnarray}
This means that, for the scenario ``not all tokens meet at any node after $n-1$ time steps'' we have
\begin{equation}\label{ProbNotMeet}
\text{P}_{\text{Not all tok. at any node at step $n-1$}} \leq 1 - n (1+D^+_{max})^{-n(n-1)} .
\end{equation}
Note that $\text{P}_{\text{Not all tok. at any node at step $n-1$}}$ denotes the probability that no node will receive all $n$ tokens after $n-1$ time steps. 

By extending the above analysis we have that after $\tau (n-1)$ time steps (i.e., $\tau$ windows, each one consisting of $n-1$ time steps), we have that the probability that ``not all tokens meet at any node after $\tau$ time steps'' is
\begin{equation}\label{ProbNotMeet_after_t}
\text{P}_{\text{Not all tok. at any node after $\tau$}} \leq [\text{P}_{\text{Not all tok. at any node at step $n-1$}}]^{\tau} .
\end{equation}
Since, from (\ref{ProbNotMeet}), we have that $P_{not \ all} < 1$ this means that, by executing Algorithm~\ref{algorithm_prob} for $\tau$ time windows, from (\ref{ProbNotMeet_after_t}) we have that
\begin{equation}\label{ProbNotMeet_after_t_to_zero}
\lim_{\tau \rightarrow \infty} \text{P}_{\text{Not all tok. at any node after $\tau$}} = 0 .
\end{equation}
As a result, with probability $1$, we have that $\exists k_0' \in \mathbb{Z}$ for which all $n$ ``tokens'' meet at node $v_j$.
This means that all $n$ ``tokens'' will have equal values $y[k_0']$ (or with maximum differences between them equal to $1$).  
Furthermore, from Iteration Step~$1$, we have that each node $v_j$ splits $y_j[k]$ in $z_j[k]$ equal (or with maximum difference between them equal to $1$) pieces $y^{(t)}_j[k]$, $z^{(t)}_j[k]$, where $y^{(t)}_j[k] \in \mathbb{Z}$ and $z^{(t)}_j[k] = 1$ for $t = 1, 2, ..., z_j[k]$ for which it holds that $\sum_{t = 1}^{z_j[k]} y^{(t)}_j[k] = y_j[k]$ and $\sum_{t = 1}^{z_j[k]} z^{(t)}_j[k] = z_j[k]$. 
This means that $\sum_{j=1}^{n} y[k_0'] = \sum_{j=1}^{n} y[0]$ and we have that the $y[k_0']$ values of each ``token'' will become equal to the ceiling or the floor of the actual average $q$ of the initial values (i.e., $y[k_0'] = \lfloor q \rfloor$ or $y[k_0'] = \lceil q \rceil$). 

Continuing the operation of Algorithm~\ref{algorithm_prob}, we have that, for time steps $k > k_0'$, the $n$ ``tokens'' will continue performing random walks in the digraph $\mathcal{G}_d$. 
This means that, since $\mathcal{G}_d$ is strongly connected, we have that $\exists k_0 \in \mathbb{N}$, where $k_0 > k_0'$, for which every node $v_j \in \mathcal{V}$ will receive (at least once) one (or multiple) ``tokens'' during the time interval $(k_0', k_0]$. 
From Iteration Step~$1$, this means that the state variables $q_j[k_0]$ of every node $v_j \in \mathcal{V}$, will be equal to the ceiling or the floor of the actual average $q$ (i.e., $q_j[k_0] = \lceil q \rceil$ or $q_j[k_0] = \lfloor q \rfloor$, for every $v_j \in \mathcal{V}$) which completes the proof of this proposition.  
\end{proof}

\section{SIMULATION RESULTS} \label{results}

In this section, we present simulation results and comparisons. 
Specifically, we present simulation results of the proposed distributed algorithm for the digraph $\mathcal{G}_d = (\mathcal{V}, \mathcal{E})$ (borrowed from \cite{2014:RikosHadj}), shown in Fig.~\ref{simul}, with $\mathcal{V} = \{ v_1, v_2, v_3, v_4, v_5, v_6, v_7 \}$ and $\mathcal{E} = \{ m_{21}, m_{51}, m_{12}, m_{52}, m_{13}, m_{53}, m_{24}, m_{54}, m_{65}, m_{75},$ $m_{36}, m_{47}, m_{67} \}$, where each node has initial quantized values $y_1[0] = 15$, $y_2[0] = 5$, $y_3[0] = 11$, $y_4[0] = 4$, $y_5[0] = 3$, $y_6[0] = 13$, and $y_7[0] = 9$, respectively. 
The real average $q$ of the initial values of the nodes, is equal to $q = \frac{57}{7} = 8.57$ which means that the quantized average $q^s$ is equal to $q^s = 8$ or $q^s = 9$. 

\vspace{-.3cm}

\begin{figure}[h]
\begin{center}
\includegraphics[width=0.40\columnwidth]{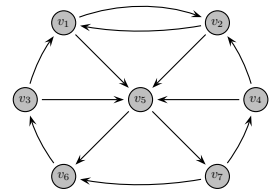}
\caption{Example of digraph for simulation of Algorithm~\ref{algorithm_prob}.}
\label{simul}
\end{center}
\end{figure}

\vspace{-.3cm}

In Figure~\ref{simul_plot} we plot the state variable $q_j^s[k]$ of every node $v_j \in \mathcal{V}$ as a function of the number of iterations $k$ for the digraph shown in Fig.~\ref{simul}. 
The plot demonstrates that Algorithm~\ref{algorithm_prob} is able to achieve a common quantized consensus value to the average of the initial states after a finite number of iterations.

\begin{figure} [ht]
\centering
\includegraphics[width=75mm]{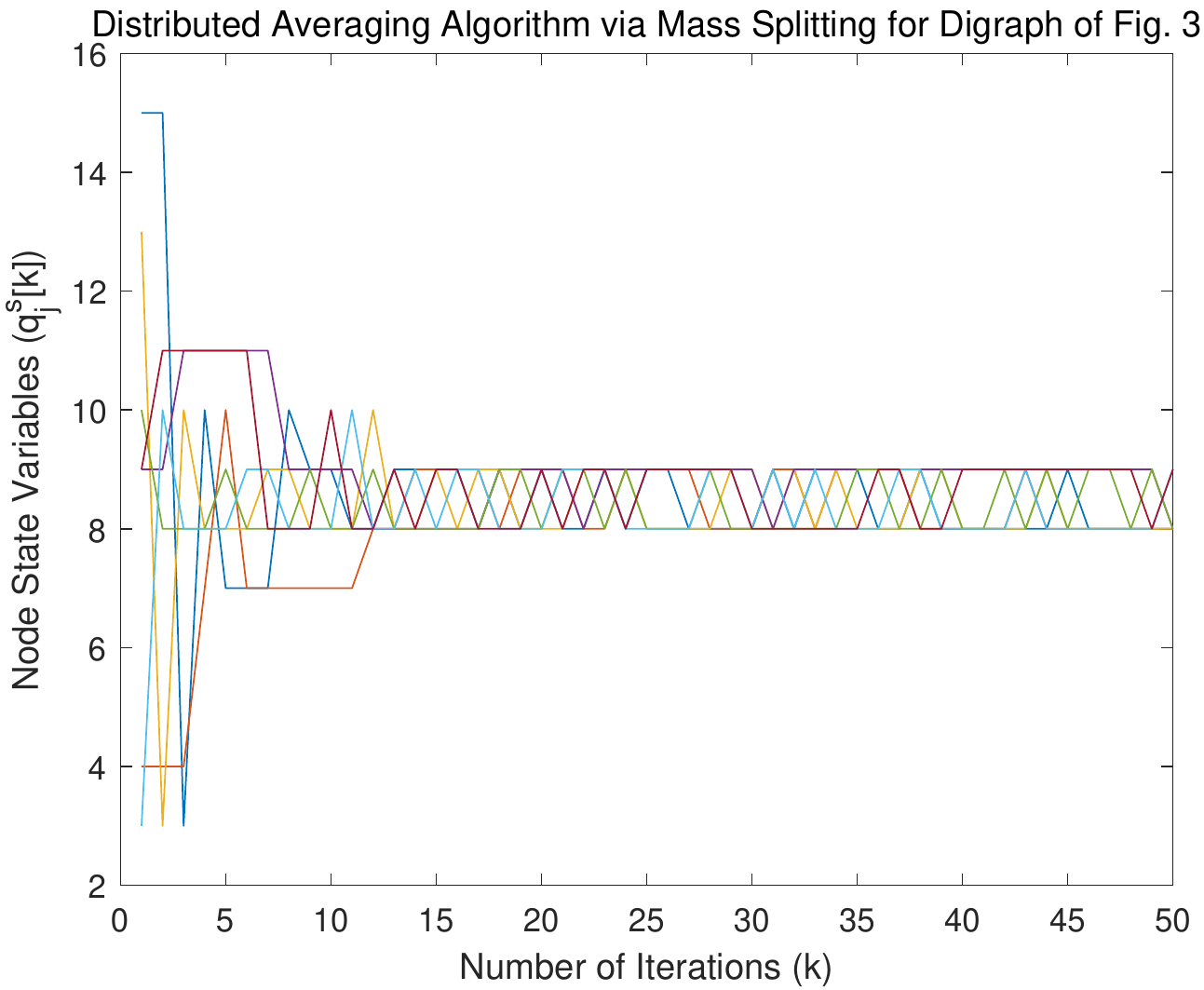}
\caption{Node state variables plotted against the number of iterations for Algorithm~\ref{algorithm_prob} for the digraph shown in Fig.~\ref{simul}.}
\label{simul_plot}
\end{figure}

Now, we compare its performance against four other algorithms: (a) the quantized gossip algorithm presented in \cite{2007:Basar} in which, at each time step $k$, one edge\footnote{Note here that the algorithm presented in \cite{2007:Basar} requires the underlying graph to be undirected. For this reason, in Figure~\ref{comp20}, we consider, for the algorithm in \cite{2007:Basar}, the underlying graph to be undirected (i.e., if $(v_j, v_i) \in \mathcal{E}$ then also $(v_i, v_j) \in \mathcal{E}$) while, for the algorithms in \cite{2011:Cai_Ishii, 2016:Chamie_Basar, 2018:RikosHadj_CDC} we consider the underlying graph to be directed.} is selected at random, independently from earlier instants and the values of the nodes that the selected edge is incident on are updated, (b) the quantized asymmetric averaging algorithm presented in \cite{2011:Cai_Ishii} in which, at each time step $k$, one edge, say edge $(v_l, v_j)$, is selected at random and, node $v_j$ sends its state information and surplus and node $v_l$ performs updates 
over its own state and surplus values, (c) the distributed averaging algorithm with quantized communication presented in \cite{2016:Chamie_Basar} in which, at each time step $k$, each agent $v_j$ broadcasts a quantized version of its own state value towards its out-neighbors, (d) the distributed averaging algorithm with quantized communication presented in \cite{2018:RikosHadj_CDC} in which, at each time step $k$, each agent sends its mass variables towards a randomly chosen out-neighbor in the form of a quantized fraction.

Figure~\ref{comp20} presents a study of the case of $1000$ digraphs of $20$ nodes each, in which the average of the nodes initial values is equal to $q = \dfrac{651}{20} = 32.55$. 
The results shown are averaged over $1000$ graphs. 
The top of Figure~\ref{comp20} suggests that the operation of Algorithm~\ref{algorithm_prob} outperforms the quantized distributed algorithms in the available literature \cite{2007:Basar, 2011:Cai_Ishii, 2016:Chamie_Basar, 2018:RikosHadj_CDC}.

\begin{figure*} [ht]
\centering
\includegraphics[width=55mm]{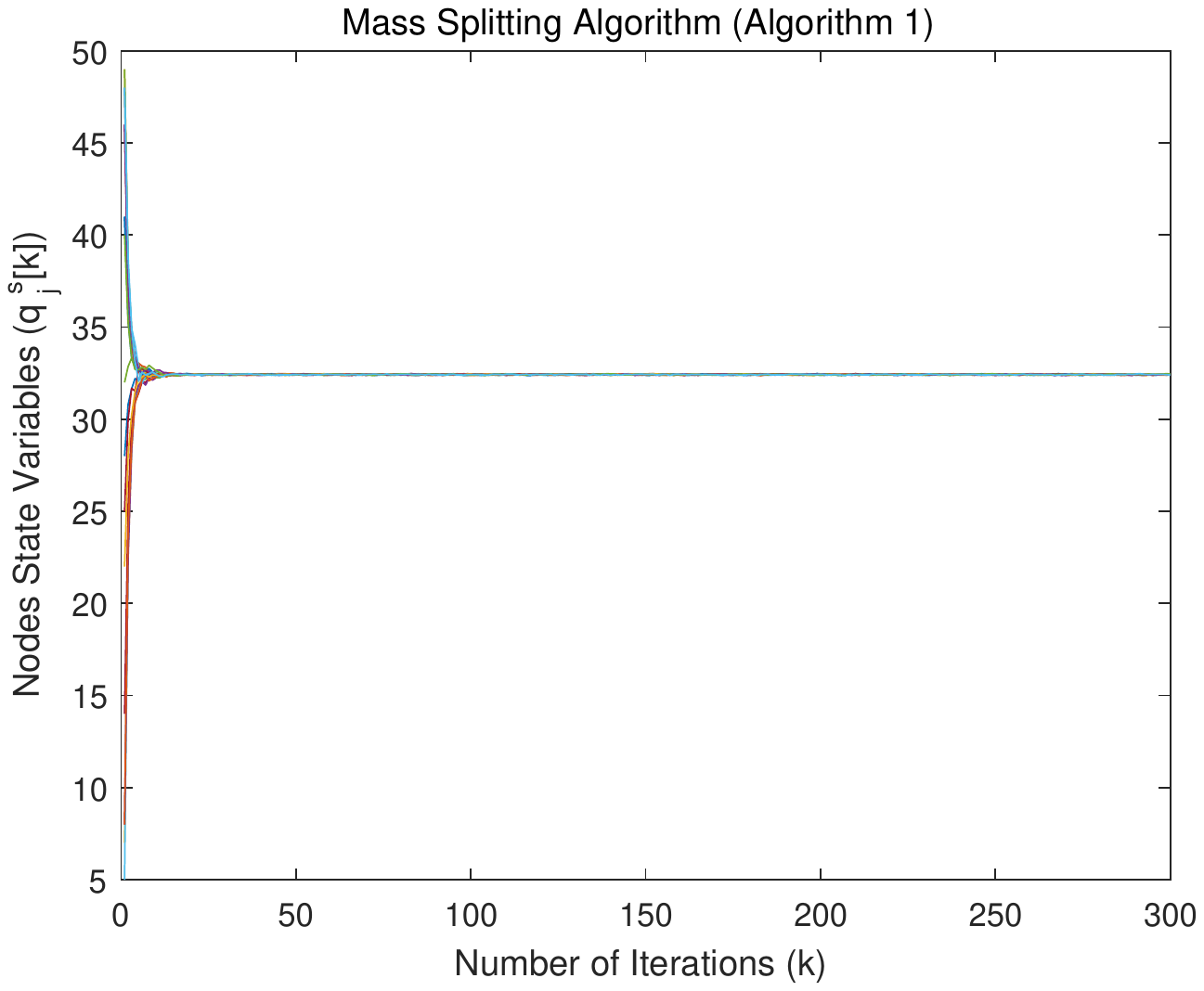}~~\hspace*{0.5cm}
\includegraphics[width=55mm]{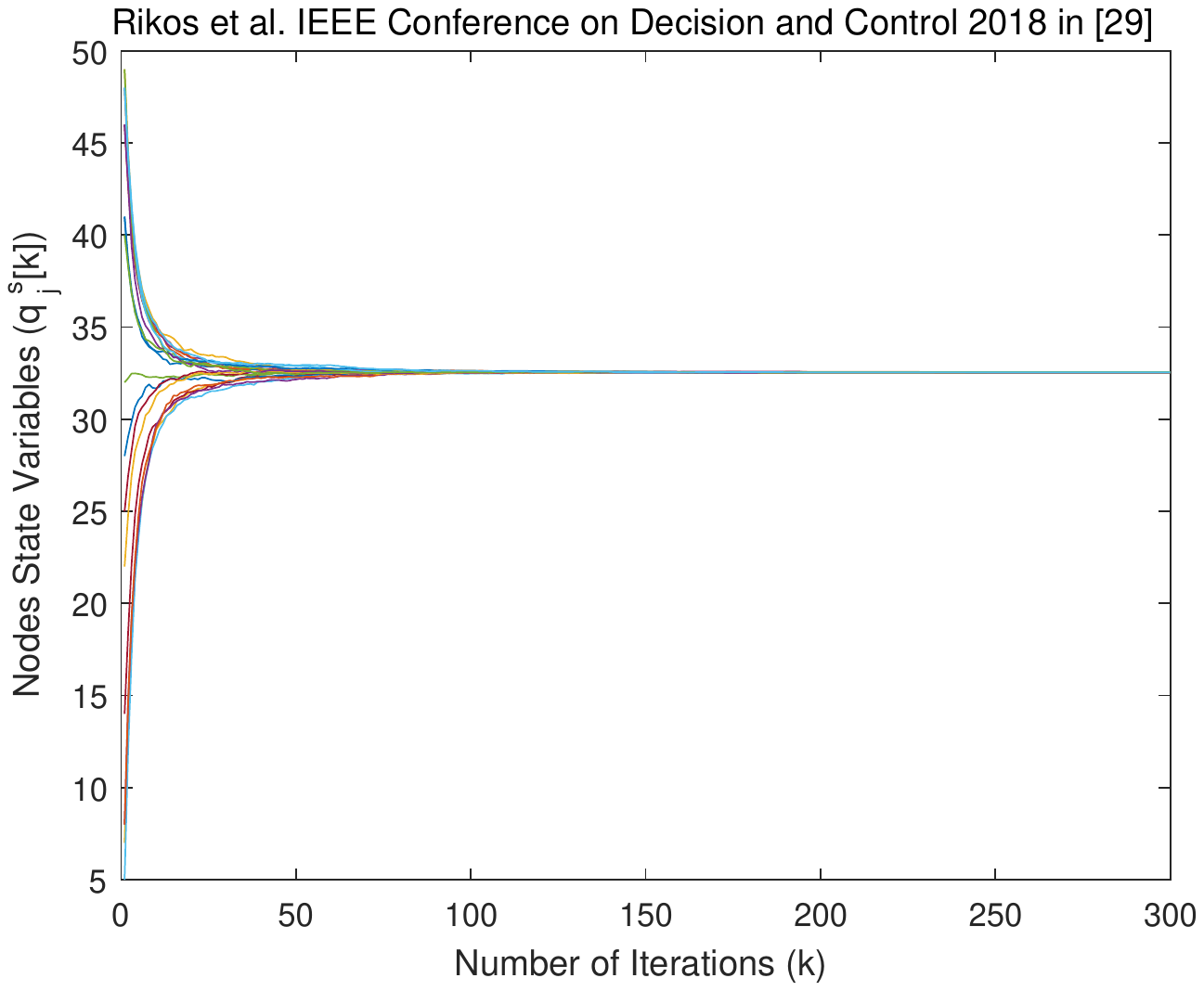} \\ \vspace{.2cm}
\includegraphics[width=55mm]{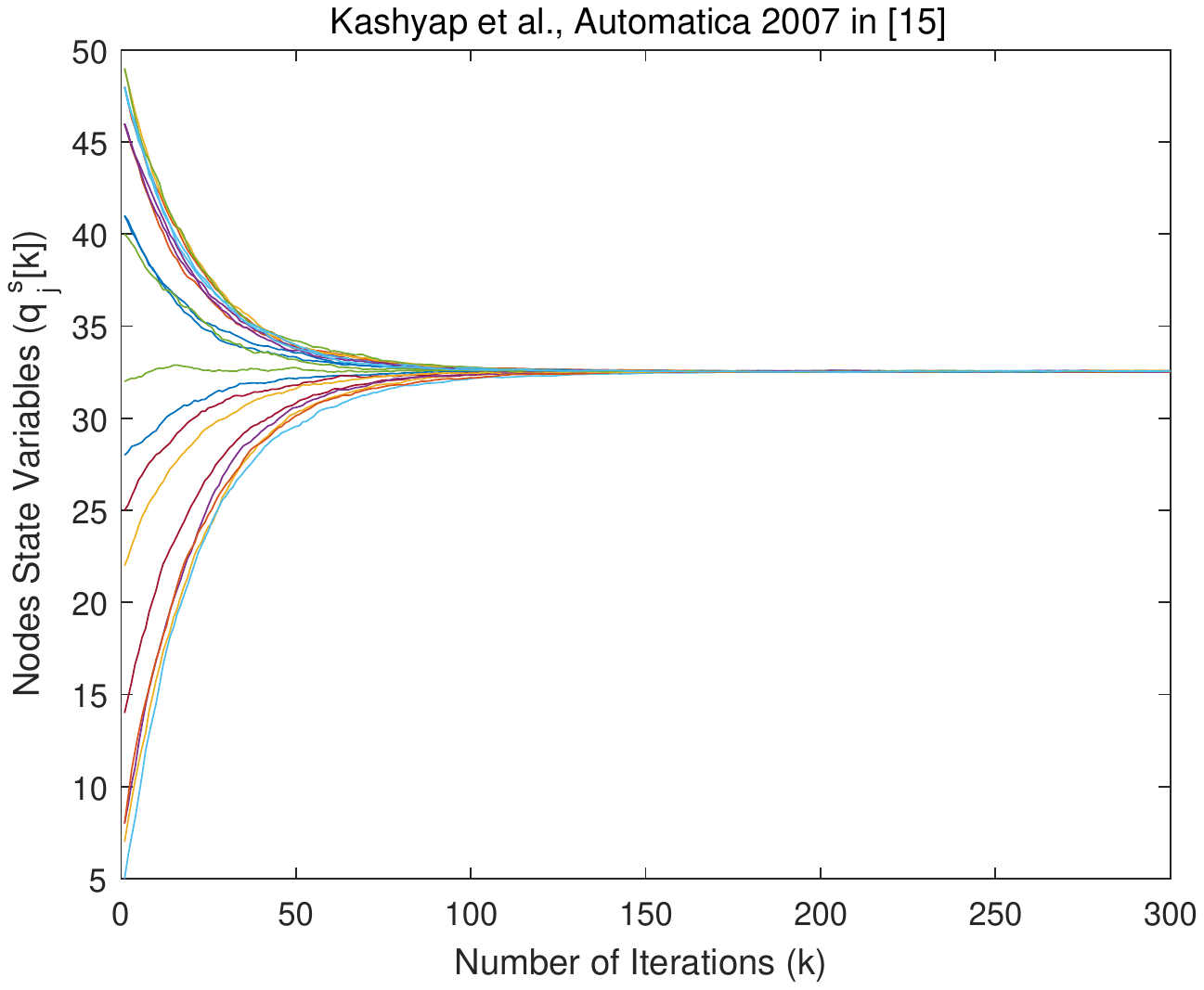}~~\hspace*{0.5cm}
\includegraphics[width=55mm]{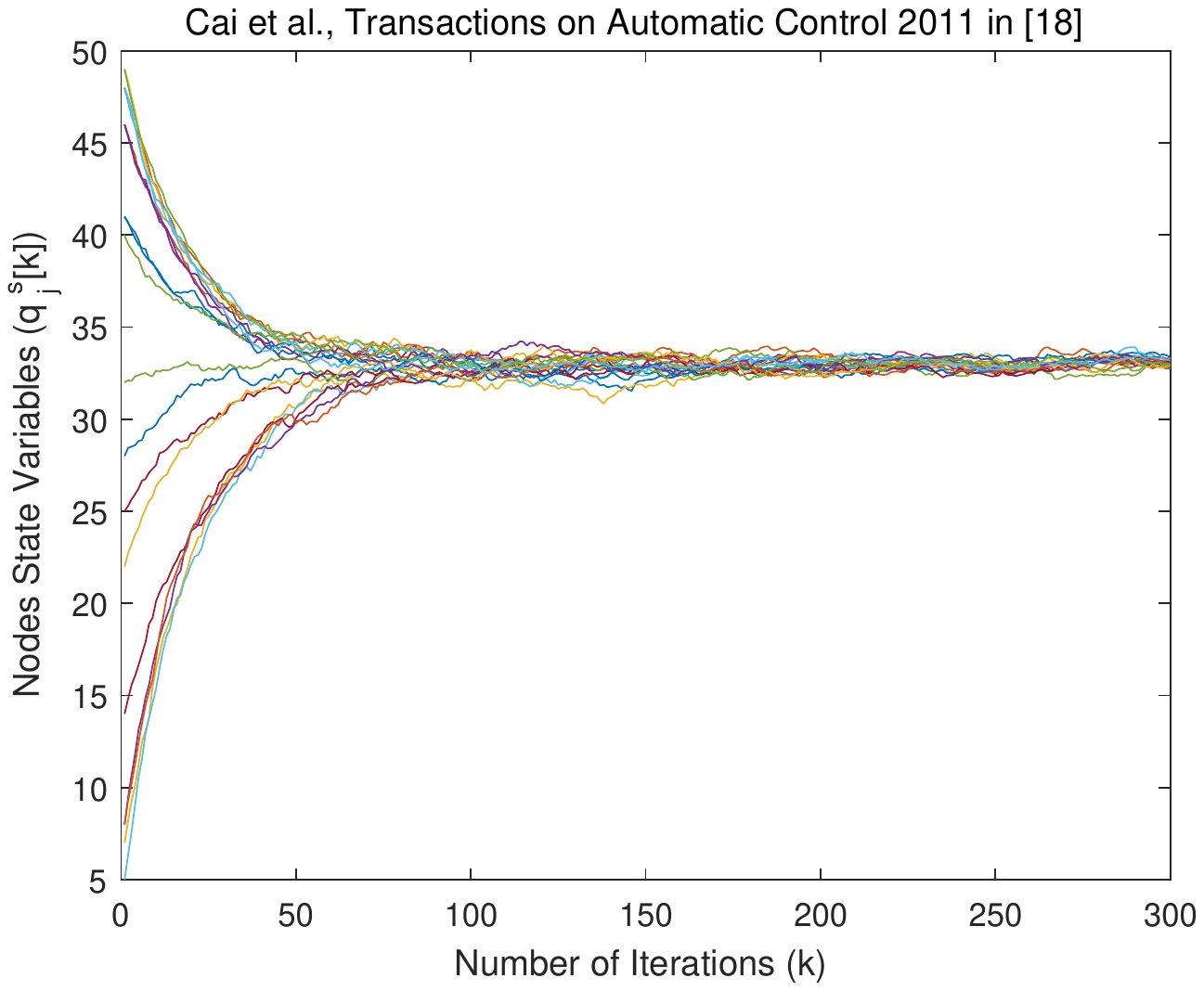}~~\hspace*{0.5cm}
\includegraphics[width=55mm]{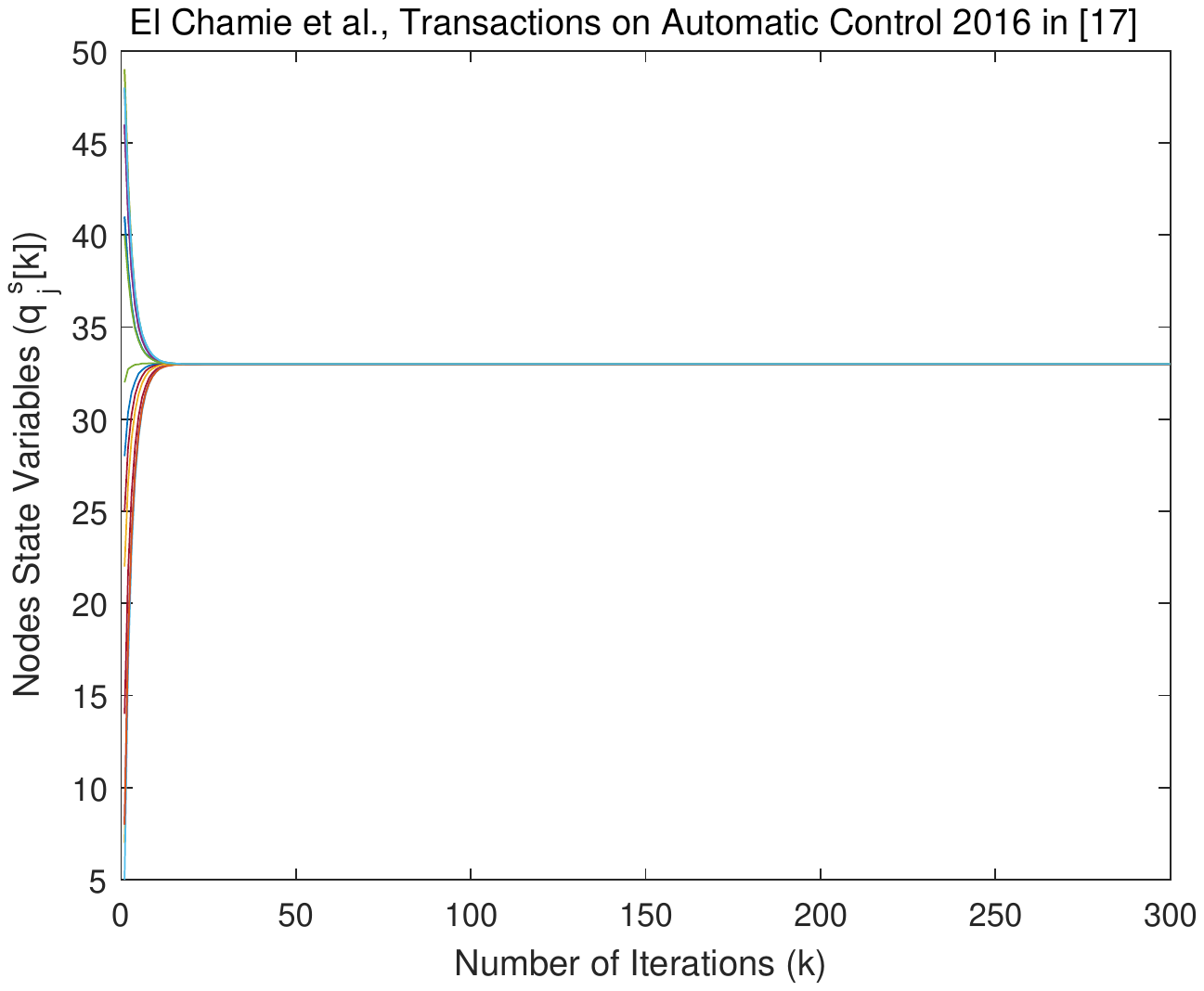}
\caption{Comparison between Algorithm~\ref{algorithm_prob}, the distributed averaging algorithm with quantized communication in \cite{2018:RikosHadj_CDC}, the quantized gossip algorithm presented in \cite{2007:Basar}, the quantized asymmetric averaging algorithm presented in \cite{2011:Cai_Ishii}, and the distributed averaging algorithm with quantized communication presented in \cite{2016:Chamie_Basar} for $1000$ random averaged digraphs of $20$ nodes each.}
\label{comp20}
\end{figure*}

\begin{remark}
It is worth noting, that the quantized distributed algorithms in \cite{2007:Basar, 2011:Cai_Ishii} only involve a single exchange between a single randomly chosen pair of neighboring nodes at each iteration. 
Furthermore, the doubly stochastic matrix which is necessary for the operation of the distributed algorithm in \cite{2016:Chamie_Basar} was formulated by (i) calculating a set of edge weights that balance the given strongly connected digraph with the distributed strategies presented in \cite{2014:RikosHadj} and (ii) by performing a max consensus protocol, adding a nonzero self-loop for every node $v_j$, and normalizing, according to the distributed strategies presented in \cite{2012:CortesJournal}. 
\end{remark}

\section{CONCLUSIONS}\label{future}

We have considered the quantized average consensus problem and presented a randomized distributed averaging algorithm in which the processing, storing and exchange of information between neighboring agents is subject to uniform quantization. 
We analyzed its operation, established that it will reach quantized consensus after a finite number of iterations and argued that its convergence speed appears to be the fastest in the available literature, which allows convergence to the quantized average of the initial values after a finite number of time steps, without any specific requirements regarding the network that describes the underlying communication topology (see \cite{2016:Chamie_Basar}).

In the future we plan to extend the operation of the proposed algorithm to more realistic cases, such as transmission delays over the communication links and the presence of unreliable links over the communication network. 
Furthermore, we plan to design distributed strategies under which every agent in the network will be able to determine whether quantized average consensus has been reached (and thus proceed to execute more complicated control or coordination tasks).

\bibliographystyle{IEEEtran}
\bibliography{bibliografia_consensus}

\balance

\end{document}